\documentclass[journal,twoside,web]{ieeecolor}
\usepackage{kjkslim}
\usepackage{comment}
\usepackage{array}

\definecolor{subsectioncolor}{rgb}{0,0.35,0.2}

\title{Indirect data-driven predictive control and the state-space predictor}

\author{Levi D. Reyes Premer, Arash J. Khabbazi, and Kevin J. Kircher, \IEEEmembership{Senior Member, IEEE}
\thanks{The authors are with the School of Mechanical Engineering, Purdue University, 585 Purdue Mall, West Lafayette, Indiana, USA. \{{\tt \small lreyespr,arashjkh,kkirche}\}{\tt \small@purdue.edu}}
}

\begin{document}

\maketitle
\thispagestyle{empty}
\pagestyle{empty}

\begin{abstract}
We define trajectory predictive control (TPC) as a class of indirect data-driven predictive control (DDPC) methods that represent future outputs as linear in past inputs/outputs and future inputs. TPC unifies many DDPC variants with different predictor structures. We introduce a predictor with a state-space representation and show that with it, TPC inherits the mature theory of linear model predictive control. In numerical experiments, the state-space predictor outperforms existing predictors, especially for small training datasets. 
\end{abstract}

\begin{IEEEkeywords}
Data-driven predictive control, identification for control, model predictive control, system identification
\end{IEEEkeywords}

\section{INTRODUCTION}

\IEEEPARstart{M}{odel} predictive control (MPC) has been applied for decades in several industries \cite{morari1999model}, but can be costly to implement due in part to its reliance on a model that predicts how a system will respond to planned control inputs. Data-driven predictive control (DDPC) emerged as an alternative to MPC around 1999 \cite{favoreel1999spc} and has recently seen a resurgence of research activity \cite{berberich2025overview}. Broadly speaking, where MPC often uses physics-based models developed by domain experts, DDPC aims to optimize closed-loop performance given only a set of past input/output data. As gathering information-rich input/output data can be costly or risky, especially in safety-critical applications, DDPC should ideally work with small training datasets gathered in closed loop.

DDPC methods can be categorized as indirect or direct \cite{krishnan2021direct, dorfler2023data}. Indirect DDPC identifies a system model then embeds it in control optimization. Direct DDPC methods, such as data-enabled predictive control (DeePC) \cite{coulson2019data}, embed the training data directly in control optimization, bypassing system identification. DeePC is based on Willems' fundamental lemma of behavioral system theory \cite{willems2007behavioral, markovsky2021behavioral}, which represents future input/output trajectories as linear combinations of input/output trajectories from the training data. DeePC performs well for deterministic linear systems but can break down under uncertainty or nonlinearity. This observation has motivated researchers to modify DeePC with a variety of relaxations and regularizations aimed at improving robustness \cite{coulson2021distributionally, dorfler2022bridging, breschi2023data}.

DeePC modifications have progressively blurred the boundary between indirect and direct DDPC. In \cite{dorfler2022bridging}, for example, D{\"o}rfler et al. show that two direct DDPC variants are convex relaxations of indirect DDPC, with regularization corresponding to implicit system identification. In \cite{breschi2023data}, Breschi et al. show that adding a least-norm constraint effectively makes DeePC an indirect method. In \cite{chiuso2025harnessing}, Chiuso et al. establish a DDPC separation principle for linear, time-invariant (LTI) systems with quadratic costs and no constraints. In this setting, separating identification from the rest of DDPC design does not reduce DDPC's closed-loop performance. Conducting system identification separately also has the advantage of providing an explicit model with quantified uncertainty, enabling principled analysis, simulation, and tuning of controllers \cite{dorfler2023data}.

This paper collects a family of indirect DDPC methods under a unifying framework that we refer to as trajectory predictive control (TPC). While established system identification methods often focus on learning an LTI state-space model \cite{van1994n4sid, verhaegen1992subspace}, TPC methods represent the full output trajectory over the prediction horizon as a linear function of the recent input/output history and the planned input trajectory. As the recent input/output history is perfectly observed, TPC enables output-feedback control without state estimation. This paper shows how TPC encompasses a variety of indirect DDPC methods and how TPC relates to direct DDPC via DeePC.

This paper then introduces a trajectory predictor that, to the authors' knowledge, has not been studied in the DDPC literature. Unlike existing trajectory predictors, the predictor studied here corresponds to an LTI state-space model with the recent input/output history as the state. This correspondence establishes TPC with the `state-space predictor' as a special case of conventional MPC with an LTI state-space model. Conceptually, this result brings DDPC full circle, returning it to the well-studied framework of MPC. Pragmatically, this result equips DDPC with a mature body of LTI MPC theory and methods \cite{rawlings2020model}. 

In numerical experiments, TPC with the state-space predictor performs as well as TPC with any existing predictor, approaching the limit of oracle linear-quadratic-Gaussian (LQG) control with perfect knowledge of the true system model. For small training datasets, the state-space predictor outperforms other trajectory predictors because it has fewer parameters.

The rest of this paper is organized as follows. \S\ref{notation} collects notation and mathematical preliminaries. \S\ref{tpcSec} introduces the general TPC framework. \S\ref{predictorSurvey} relates TPC to existing DDPC variants. \S\ref{mainResult} introduces the state-space predictor, establishes its correspondence to an LTI state-space system, and compares it to other trajectory predictors. \S\ref{experiments} presents numerical experiments. \S\ref{futureWork} discusses possible directions for future work.

\subsection{Notation and preliminaries}
\label{notation}

The sets of real scalars, real $n$-dimensional column vectors, and real $m \times n$ matrices are $\R$, $\R^n$, and $\R^{m \times n}$, respectively. An identity matrix is $I_n \in \R^{n \times n}$ and $0_{m,n} \in \R^{m \times n}$ is a matrix of zeros. The pseudoinverse of $A \in \R^{m \times n}$ with rank$(A) = m \leq n$ is $A^\dagger = A^\top (A A^\top)^{-1} \in \R^{n \times m}$. For $A \in \R^{m \times n}$ with rank$(A) = m \leq n$ and $b \in \R^m$, $\hat x = A^\dagger b$ is the unique minimizer of $\norm{x}_2$ subject to $A x = b$, where $\norm{\cdot}_2$ is the Euclidean norm. For the linear model $Y = \Theta X + E$ with targets $Y \in \R^{m \times n}$, parameters $\Theta \in \R^{m \times p}$, features $X \in \R^{p \times n}$, errors $E \in \R^{m \times n}$, and rank($X) = p \leq n$, $\hat \Theta = Y X^\dagger$ is the unique minimizer of the mean squared error 
\[
\norm{Y - \Theta X}_\text{Fro}^2 / n = \text{trace}( (Y - \Theta X)^\top (Y - \Theta X) ) / n ,
\]
where $\norm{\cdot}_\text{Fro}$ is the Frobenius norm. The stacked column vector with $x \in \R^n$ above $y \in \R^m$ is $(x,y) = \bmat x \\ y \emat \in \R^{n + m}$.

\section{TRAJECTORY PREDICTIVE CONTROL}
\label{tpcSec}

We consider a system with discrete time index $t$. At each $t$, the controller sends an input $u(t) \in \R^{n_u}$ to the system, the system evolves to a new state (through unknown dynamics, possibly influenced by unmeasured disturbances), and the controller receives a (possibly noisy) output $y(t) \in \R^{n_y}$ from the system. We define TPC, summarized in Alg. \ref{tpcAlg}, as a family of output-feedback indirect DDPC algorithms that choose the input $u(t)$ based on the last $m$ inputs and outputs,
\bneq
z_p(t) = \bmat
z(t-m) \\
\vdots \\
z(t-1) \\
\emat , \text{ where } z(t) = \bmat u(t) \\ y(t) \\ \emat \in \R^{n_z} \label{z}
\eneq
and $n_z = n_u + n_y$, to
\bneq
\begin{aligned}
\text{minimize} \quad &c_0(u_f(t), y_f(t)) + r(e_f(t)) \label{tpc} \\
\text{subject to} \quad &c_j(u_f(t), y_f(t)) \leq 0, \ j = 1, \dots, J \\
&y_f(t) = P z_p(t) + F u_f(t) + e_f(t) .
\end{aligned}
\eneq

The variables in \eqref{tpc} are the planned input and output trajectories $u_f(t) \in \R^{h n_u}$ and $y_f(t) \in \R^{h n_y}$, where $h$ is the prediction horizon, and $e_f(t) \in \R^{h n_y}$. Depending on the context, $e_f(t)$ can be viewed as a slack variable, as noise, or as a prediction or estimation error. We write the variables as
\[
u_f(t) = \bmat
u(1 | t) \\
\vdots \\
u(h | t) \\
\emat , \ y_f(t) = \bmat
y(1 | t) \\
\vdots \\
y(h | t) \\
\emat , \ e_f(t) = \bmat
e(1 | t) \\
\vdots \\
e(h | t) \\
\emat ,
\]
where the notation $i | t$ indicates a plan or prediction made $i$ steps ahead at time $t$. For example, $u(1 | t)$ is the plan for $u(t)$ made at time $t$. If the cost and constraint functions $c_0$, \dots, $c_J : \R^{h n_u} \times \R^{h n_y} \rightarrow \R$ and the regularizer $r : \R^{h n_y} \rightarrow \R \cup \setof{\infty}$ are convex, then \eqref{tpc} is a convex optimization problem.

\begin{algorithm}[t]
\caption{Trajectory predictive control}
\label{tpcAlg}
\begin{algorithmic}
\State {\bf Input:} 
Trajectory predictor matrices $P$ and $F$; cost and constraint functions $c_0$, \dots, $c_J$; regularization function $r$; initial input/output history $z_p(1)$

\For{$t = 1$, 2, \dots}
\bit
\item Solve \eqref{tpc} and implement $u(t) = u^\star(1 | t)$
\item Observe $y(t)$ and form
{\small
\[
z(t) =
\bmat
u(t) \\
y(t)
\emat, \ z_p(t + 1) =
\bmat
z(t - m + 1) \\
\vdots \\
z(t)
\emat
\]
}
\eit
\EndFor
\end{algorithmic}
\end{algorithm}

This paper focuses on the general trajectory predictor
\bneq
y_f(t) = P z_p(t) + F u_f(t) + e_f(t) . \label{predictor}
\eneq
We assume $P \in \R^{h n_y \times m n_z}$ and $F \in \R^{h n_y \times h n_u}$ are identified from a trajectory $\tilde z(1)$, \dots, $\tilde z(d)$ of input/output examples, with $e_f(t)$ treated as an estimation error to be minimized. We organize the training data (denoted by tildes) into Hankel matrices containing $n = d - m - h + 1$ trajectory examples:
\[
\begin{aligned}
\tilde Z_{\textcolor{black}{p}} &= \bmat \tilde z_p(m + 1) & \cdots & \tilde z_p(d - h + 1) \emat \in \R^{m n_z \times n} \\
\tilde U_{\textcolor{black}{f}} &= \bmat \tilde u_f(m + 1) & \cdots & \tilde u_f(d - h + 1) \emat \in \R^{h n_u \times n} \\
\tilde Y_{\textcolor{black}{f}} &= \bmat \tilde y_f(m + 1) & \cdots & \tilde y_f(d - h + 1) \emat \in \R^{h n_y \times n} , \\
\end{aligned}
\]
where
\[
\tilde u_{\textcolor{black}{f}}(t) = \bmat
\tilde u(t) \\
\vdots \\
\tilde u(t + h - 1) \\
\emat , \ \tilde y_{\textcolor{black}{f}}(t) = \bmat
\tilde y(t) \\
\vdots \\
\tilde y(t + h - 1) \\
\emat ,
\]
and $\tilde z_{\textcolor{black}{p}}(t)$ is constructed as in \eqref{z}.

Most of this paper will take the regularizer to be
\[
r(e_f(t)) = \delta_0(e_f(t)) = \begin{cases}
0  &\text{if } e_f(t) = 0 \\
\infty  &\text{otherwise.}
\end{cases}
\]
With $r = \delta_0$, the general TPC problem \eqref{tpc} is feasible only if $e_f(t) = 0$, so an equivalent problem is to
\bneq
\begin{aligned}
\text{minimize} \quad &c_0(u_f(t), y_f(t))  \\
\text{subject to} \quad &c_j(u_f(t), y_f(t)) \leq 0, \ j = 1, \dots, J \\
&y_f(t) = P z_p(t) + F u_f(t) , \label{delta0tpc}
\end{aligned}
\eneq
with variables $u_f(t),y_f(t)$. Less strict regularizers let the optimization steer the planned $u_f(t), y_f(t)$ away from the central prediction $u_f(t), P z_p(t) + F u_f(t)$ to reduce the cost $c_0(u_f(t), y_f(t))$, a tactic that several DDPC variants use.

DDPC research often restricts the cost to be quadratic and the constraints to be separable in the input and output. The formulation \eqref{tpc} does not make those restrictions in general but includes them as special cases.

\section{DDPC AND TRAJECTORY PREDICTORS}
\label{predictorSurvey}

The predictor structure \eqref{predictor} in the TPC problem \eqref{tpc} aligns with Willems' fundamental lemma of behavioral system theory \cite{willems2007behavioral, markovsky2021behavioral}, which applies if the underlying system is deterministic, LTI, and controllable, and if the training inputs $\tilde u(1)$, \dots, $\tilde u(d)$ are persistently exciting \cite{willems2005note, markovsky2023persistency}. Given the recent input/output history $z_p(t)$, Willems' fundamental lemma implies that $u_f(t), y_f(t)$ is a possible trajectory if and only if there exists an $\alpha(t) \in \R^n$ such that 
\bneq
\bmat
\tilde Z_{\textcolor{black}{p}} \\
\tilde U_{\textcolor{black}{f}} \\
\tilde Y_{\textcolor{black}{f}} \\
\emat \alpha(t) = \bmat
z_p(t) \\
u_f(t) \\
y_f(t) \\
\emat . \label{willems}
\eneq
For fixed $z_p(t)$ and $u_f(t)$, \eqref{willems} implies that $y_f(t) = \tilde Y_{\textcolor{black}{f}} \alpha(t)$ is a possible output trajectory for any $\alpha(t)$ satisfying
\bneq
\bmat
\tilde Z_{\textcolor{black}{p}} \\
\tilde U_{\textcolor{black}{f}} \\
\emat \alpha(t) = \bmat
z_p(t) \\
u_f(t) \\
\emat . \label{willemsPredictor}
\eneq
The dimension $n$ of $\alpha(t)$ may exceed the row rank of $\bmat \tilde Z_{\textcolor{black}{p}}^\top & \tilde U_{\textcolor{black}{f}}^\top \emat^\top$, which is $m n_z + h n_u$ under persistent excitation \cite{willems2005note, markovsky2023persistency}, so a solution $\alpha(t)$ to \eqref{willemsPredictor} may not be unique.

The DeePC method \cite{coulson2019data} replaces the trajectory predictor \eqref{predictor} in problem \eqref{delta0tpc} by \eqref{willems}, choosing $u_f(t)$, $y_f(t)$, $\alpha(t)$ to
\bneq
\begin{aligned}
\text{minimize} \quad &c_0(u_f(t), y_f(t)) \label{deepc} \\
\text{subject to} \quad &c_j(u_f(t), y_f(t)) \leq 0, \ j = 1, \dots, J \\
&\eqref{willems}.
\end{aligned}
\eneq
DeePC performs well for deterministic LTI systems, but under uncertainty or nonlinearity, the true system behavior can diverge wildly from the planned trajectory $u_f^\star(t), y_f^\star(t)$ corresponding to an optimal $\alpha^\star(t)$. In \cite{moffat2025bias}, Moffat et al. explain this in part through {\it optimism bias}: When many $\alpha(t)$ may satisfy \eqref{willems}, the DeePC optimization is free to choose an $\alpha^\star(t)$ that reduces the cost $c_0$. DeePC can also suffer from bias when training data are gathered in closed loop \cite{dinkla2023closed, moffat2025bias}.

\subsection{DeePC, SPC, \texorpdfstring{$\gamma$}{g}-DDPC, and the subspace predictor}

Connections between \eqref{willems} and \eqref{predictor} can be seen by finding the least-norm $\hat \alpha(t)$ satisfying \eqref{willemsPredictor}, which solves
\[
\begin{aligned}
\text{minimize} \quad &\norm{ \alpha(t) }_2  \\
\text{subject to} \quad &\bmat
\tilde Z_{\textcolor{black}{p}} \\
\tilde U_{\textcolor{black}{f}} \\
\emat \alpha(t) = \bmat
z_p(t) \\
u_f(t) \\
\emat .
\end{aligned}
\]
If $\bmat \tilde Z_{\textcolor{black}{p}}^\top & \tilde U_{\textcolor{black}{f}}^\top \emat^\top$ has full row rank, then 
\[
\begin{aligned}
\hat \alpha(t) &= \bmat
\tilde Z_{\textcolor{black}{p}} \\
\tilde U_{\textcolor{black}{f}} \\
\emat^\dagger \bmat
z_p(t) \\
u_f(t) \\
\emat .
\end{aligned}
\]
The output trajectory corresponding to $\hat \alpha(t)$ is
\[
\hat y_f(t) = \tilde Y_{\textcolor{black}{f}} \hat \alpha(t) = P_\text{sbs} z_p(t) + F_\text{sbs} u_f(t) 
\]
with
\bneq
\bmat P_\text{sbs} & F_\text{sbs} \emat = \tilde Y_{\textcolor{black}{f}} \bmat
\tilde Z_{\textcolor{black}{p}} \\
\tilde U_{\textcolor{black}{f}} \\
\emat^\dagger . \label{subspace}
\eneq
The predictor \eqref{predictor} with $P = P_\text{sbs}$ and $F = F_\text{sbs}$ from \eqref{subspace} is known as the {\it subspace predictor} \cite{moffat2025bias, favoreel1999spc, moffat2024transient}. TPC with the subspace predictor and $r = \delta_0$ is known as subspace predictive control (SPC) \cite{favoreel1999spc}. SPC is equivalent to DeePC augmented with the constraint $\alpha(t) = \hat \alpha(t)$, the least-norm solution to \eqref{willemsPredictor} \cite{sader2025causality}.

In \cite{breschi2023data}, Breschi et al. obtain an explicit formula for the subspace predictor from the LQ decomposition
\bneq
\bmat
\tilde Z_{\textcolor{black}{p}} \\
\tilde U_{\textcolor{black}{f}} \\
\tilde Y_{\textcolor{black}{f}} \\
\emat = \bmat
L_{11} & & \\
L_{21} & L_{22} & \\
L_{31} & L_{32} & L_{33} \\
\emat \bmat
Q_1 \\
Q_2 \\
Q_3 \\
\emat . \label{lq}
\eneq
Here $L_{11} \in \R^{m n_z \times m n_z}$, $L_{22} \in \R^{h n_u \times h n_u}$, and $L_{33} \in \R^{h n_y \times h n_y}$ are lower triangular; $L_{21} \in \R^{h n_u \times m n_z}$, $L_{31} \in \R^{h n_y \times m n_z}$, and $L_{32} \in \R^{h n_y \times h n_u}$ are generally dense; and $Q_1 \in \R^{m n_z \times n}$, $Q_2 \in \R^{h n_u \times n}$, and $Q_3 \in \R^{h n_y \times h n_y}$ are orthonormal: Each $Q_i Q_j^\top$ equals an identity matrix if $i = j$ and a zero matrix if $i \neq j$. If $\bmat \tilde Z_{\textcolor{black}{p}}^\top & \tilde U_{\textcolor{black}{f}}^\top \emat^\top$ has full row rank, then $L_{11}$ and $L_{22}$ are invertible and
\bneq
\begin{aligned}
\bmat P_\text{sbs} & F_\text{sbs} \emat &= \tilde Y_{\textcolor{black}{f}} \bmat Q_1^\top & Q_2^\top \emat \bmat L_{11} & \\ L_{21} & L_{22} \emat^{-1} \\
&= \tilde Y_{\textcolor{black}{f}} \bmat Q_1^\top & Q_2^\top \emat \bmat L_{11}^{-1} & \\ - L_{22}^{-1} L_{21} L_{11}^{-1} & L_{22}^{-1} \emat  \label{inverse} \\
\end{aligned}
\eneq
from proposition 3.9.7 of \cite{bernstein2009matrix}. But
$\tilde Y_{\textcolor{black}{f}} \bmat Q_1^\top & Q_2^\top \emat = \bmat L_{31} & L_{32} \emat$ by \eqref{lq} and orthonormality of the $Q_i$, so
\bneq
F_\text{sbs} = L_{32} L_{22}^{-1} , \ P_\text{sbs} = (L_{31} - F_\text{sbs} L_{21}) L_{11}^{-1} . \label{subspaceLQ}
\eneq

TPC with the subspace predictor in the form \eqref{subspaceLQ} is known as $\gamma$-DDPC \cite{breschi2023data}. In \cite{sader2025causality}, Sader et al. define regularized-$\gamma$-DDPC by adding a slack variable equivalent to $e_f(t)$ and regularizing it with $r(e_f(t)) = \lambda \norm{e_f(t)}_2^2$.

\subsection{Causal-\texorpdfstring{$\gamma$}{g}-DDPC and the multistep predictor}

In general, the subspace predictor is not causal: It models future inputs as influencing past outputs. To see this, we write $F$ in the general trajectory predictor \eqref{predictor} in block form as
\[
F = \bmat
F_{11} & \cdots & F_{1h} \\
\vdots & \ddots & \vdots \\
F_{h1} & \cdots & F_{hh} \\
\emat \text{ where } F_{ij} \in \R^{n_y \times n_u} .
\]

\begin{definition}
\label{causal}
The trajectory predictor \eqref{predictor} is {\bf causal} if $F$ is block lower triangular (BLT): $F_{ij} = 0$ for $i = 1$, \dots, $h - 1$ and $j = i + 1$, \dots, $h$. 
\end{definition}

Causality implies that the planned input $u(j | t)$ in \eqref{tpc} can only influence the planned output $y(i | t)$ if $i \geq j$. Although $L_{22}^{-1}$ is lower triangular, $L_{32}$ is dense in general. In the subspace predictor, $F_\text{sbs}$ is the product of the dense matrix $L_{32}$ and the lower triangular matrix $L_{22}^{-1}$, so in general $F_\text{sbs}$ is dense and the subspace predictor is not causal.

In \cite{sader2025causality}, Sader et al. investigate the {\it multistep predictor} $P_\text{mlt}$, $F_\text{mlt}$, defined as solving the constrained least squares problem
\[
\begin{aligned}
\text{minimize} \quad & \| \tilde Y_{\textcolor{black}{f}} - P \tilde Z_{\textcolor{black}{p}} - F \tilde U_{\textcolor{black}{f}} \|_\text{Fro}^2 \\
\text{subject to} \quad & F \text{ BLT} \\
\end{aligned}
\]
with variables $P$ and $F$. Sader et al. show that 
\bneq
F_\text{mlt} = \text{BLT}(L_{32}) L_{22}^{-1} , \ P_\text{mlt} = (L_{31} - F_\text{mlt} L_{21}) L_{11}^{-1} . \label{multistep}
\eneq
The operator BLT$(L_{32})$ zeros out the strictly block upper triangular part of $L_{32}$, returning a version of $L_{32}$ that is BLT but otherwise unchanged. From \eqref{multistep} and \eqref{subspaceLQ}, the multistep predictor is a `causalized' version of the subspace predictor.

In \cite{sader2025causality}, Sader et al. refer to TPC with the multistep predictor \eqref{multistep} and $r = \delta_0$ as causal-$\gamma$-DDPC. In numerical examples, causal-$\gamma$-DDPC outperforms $\gamma$-DDPC and SPC for small training datasets gathered in closed loop; adding the slack variable $e_f(t)$ with $r(e_f(t)) = \lambda \norm{e_f(t)}_2^2$, a variant that Sader et al. call regularized-causal-$\gamma$-DDPC, further improves closed-loop performance for small datasets.

\subsection{Transient predictive control and the transient predictor}

In \cite{moffat2024transient, moffat2025bias}, Moffat et al. develop another approach to estimating the multistep predictor. They define the {\it transient predictor} in terms of matrices $\Phi^p \in \R^{h n_y \times m n_z}$, $\Phi^u \in \R^{h n_y \times m n_u}$, $\Phi^y \in \R^{h n_y \times h n_y}$ satisfying
\bneq
\begin{aligned}
y_f(t) = \Phi^p z_p(t) + \Phi^u u_f(t) + \Phi^y y_f(t) + \varepsilon_f(t) , \label{rawTransient} \\
\end{aligned}
\eneq
where $\varepsilon_f(t) \in \R^{h n_y}$ is an error. To enforce causality, Moffat et al. require $\Phi^u$ to be BLT and $\Phi^y$ to be strictly BLT (meaning $\Phi^y$ is BLT and $\Phi_{11}^y = \dots = \Phi_{hh}^y = 0_{n_y, n_y}$). Rearranging \eqref{rawTransient} gives an instance of the trajectory predictor \eqref{predictor} with
\bneq
\begin{aligned}
P &= (I - \Phi^y)^{-1} \Phi^p, \ F = (I - \Phi^y)^{-1} \Phi^u \\ 
e_f(t) &= (I - \Phi^y)^{-1} \varepsilon_f(t) . \label{transient}
\end{aligned}
\eneq
The inverse $(I - \Phi^y)^{-1}$ exists because $I - \Phi^y$ is lower triangular with ones along its diagonal.

In \cite{moffat2024transient}, Moffat et al. form the transient predictor $P_\text{trn}$, $F_\text{trn}$ from \eqref{transient} with $\Phi^p$, $\Phi^u$, and $\Phi^y$ estimated from the LQ decomposition of the full Hankel matrix of input/output data. In \cite{moffat2025bias}, Moffat et al. refer to TPC with $r = \delta_0$ and the predictor \eqref{transient} as transient predictive control. They show that transient predictive control is not subject to optimism bias or to bias when training data are gathered in closed loop.

\subsection{Closed-loop SPC and the fixed-length predictor}

Dong et al. \cite{dong2008closed} and Chiuso et al. \cite{chiuso2025harnessing} form the {\it fixed-length predictor} as a structured special case of \eqref{rawTransient}: 
\[
\begin{aligned}
\Phi^p &= \bmat
\phi^p_1 & \phi^p_2 & \cdots &\phi^p_m \\
& \phi^p_1 & \cdots & \phi^p_{m-1} \\
& & \ddots & \vdots \\
\emat , \ \Phi^u = \bmat
\phi^u_1 & & \\
\vdots & \ddots & \\
\phi^u_h & \cdots & \phi^u_1 \\
\emat , \\
\end{aligned}
\]
with blocks $\phi^p_1$, \dots, $\phi_m^p \in \R^{n_y \times n_z}$, $\phi^u_1$, \dots, $\phi^u_h \in \R^{n_y \times n_u}$, and $\phi^y_1$, \dots, $\phi^y_{h-1} \in \R^{n_y \times n_y}$. The matrix $\Phi^y$ is structured similarly to $\Phi^u$ but strictly BLT. The (block) $h \times m$ matrix $\Phi^p$ may be wide, square, or tall. In any case, all elements below the $\phi^p_1$ block diagonal are zero.

In \cite{dong2008closed}, Dong et al. refer to TPC with the fixed-length predictor and $r = \delta_0$ as closed-loop SPC. They show that unlike the subspace predictor, the fixed-length predictor works with training data gathered in closed loop. In \cite{chiuso2025harnessing}, Chiuso et al. use the fixed-length predictor to provide conditions under which separating identification from the rest of DDPC design incurs no loss of optimality. In \cite{liu2025closed}, Liu and Jansson use a predictor with $\Phi^u$ and $\Phi^y$ structured as in the fixed-length predictor, but with $\Phi^p$ either unstructured (as in the transient predictor) or constrained to have low rank.

\section{MPC AND THE STATE-SPACE PREDICTOR}
\label{mainResult}

We define the {\it state-space predictor} such that the trajectory predictor \eqref{predictor} is equivalent to the LTI state-space model
\bneq
\begin{aligned}
z_p(t+1) &= A z_p(t) + B u(t) + K \varepsilon(t) \\
y(t) &= C z_p(t) + D u(t) + \varepsilon(t)  \label{ss} \\
\end{aligned}
\eneq
for appropriate choices of $A \in \R^{m n_z \times m n_z}$, $B \in \R^{m n_z \times n_u}$, $C \in \R^{h n_y \times m n_z}$, $D \in \R^{h n_y \times h n_u}$, $K \in \R^{m n_z \times n_y}$, and $\varepsilon(t) \in \R^{n_y}$. We begin with the following observation.

\begin{proposition}
\label{arx}
For any causal trajectory predictor of the form \eqref{predictor}, the one-step predictor (meaning the first $n_y$-block row of \eqref{predictor}) is a linear ARX model with autoregressive memory $m$, exogenous memory $m + 1$, and delay zero.
\end{proposition}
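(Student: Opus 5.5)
The plan is to write out the first $n_y$-block row of the trajectory predictor \eqref{predictor} explicitly and read off the ARX structure. Partition $P$ into block rows $P_1, \dots, P_h \in \R^{n_y \times m n_z}$. Partition each $P_i$ further into blocks $P_{i,1}, \dots, P_{i,m} \in \R^{n_y \times n_z}$ acting on $z(t-m), \dots, z(t-1)$. Split each $P_{1,k}$ by columns as $\bmat P^u_{1,k} & P^y_{1,k} \emat$ with $P^u_{1,k} \in \R^{n_y \times n_u}$ and $P^y_{1,k} \in \R^{n_y \times n_y}$, matching $z = (u, y)$ from \eqref{z}. The first block row of \eqref{predictor} is then
\[
y(1|t) = \sum_{k=1}^m \left( P^y_{1,k} \, y(t-m+k-1) + P^u_{1,k} \, u(t-m+k-1) \right) + \sum_{j=1}^h F_{1j} \, u(j|t) + e(1|t) .
\]

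The key step uses causality via Definition \ref{causal}. Because $F$ is BLT, $F_{1j} = 0$ for $j = 2, \dots, h$, so only the term $F_{11} u(1|t)$ survives. Identify $u(1|t)$ with $u(t)$, the input actually applied at time $t$ in Alg. \ref{tpcAlg}, and $y(1|t)$ with the one-step prediction of $y(t)$. The equation then takes the ARX form
\[
y(t) = \sum_{k=1}^{m} a_k \, y(t-k) + \sum_{k=0}^{m} b_k \, u(t-k) + e(1|t) ,
\]
with coefficients
\[
a_k = P^y_{1,m-k+1}, \qquad b_k = P^u_{1,m-k+1} \ (k \ge 1), \qquad b_0 = F_{11} .
\]
Reading off the lags gives the three claimed properties. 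The autoregressive lags are $1, \dots, m$, so the autoregressive memory is $m$. The exogenous lags are $0, \dots, m$, which is $m+1$ terms, so the exogenous memory is $m+1$. The lag-zero input coefficient $F_{11}$ is present, so the delay is zero, meaning a direct feedthrough.

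No step is a serious obstacle; the argument is bookkeeping. The one point needing care is conventions. Memory here counts the number of lagged terms, and ``delay zero'' means the lag-zero input term is present, not that its coefficient is forced to be nonzero. I would state this explicitly, and note that the error $e(1|t)$ plays the role of the ARX equation error.
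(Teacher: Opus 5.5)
Your proposal is correct and matches the paper's proof essentially step for step. Both arguments use block lower triangularity of $F$ to eliminate $F_{12},\dots,F_{1h}$, split $P_1$ into its $u$- and $y$-acting blocks, and read off an ARX$(m,m+1,0)$ model with $F_{11}$ as the lag-zero input coefficient (your range $j=2,\dots,h$ is the right one; the paper's proof writes $F_{1m}$ where $F_{1h}$ is meant).
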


\begin{proof}
By definition \ref{causal}, $F_{12} = \dots = F_{1m} = 0$ for any causal trajectory predictor \eqref{predictor}, so the one-step-ahead predictor is
\bneq
y(t) = P_1 z_p(t) + F_{11} u(t) + e(t) . \label{y1causal}
\eneq
The first block row  of $P$, $P_1 \in \R^{n_y \times m n_z}$, can be written as
\[
P_1 = \bmat P_{11}^u & P_{11}^y & \cdots & P_{1m}^u & P_{1m}^y \emat  
\]
with $P_{1j}^u \in \R^{n_y \times n_u}$ and $P_{1j}^y \in \R^{n_y \times n_y}$. Recalling \eqref{z}, expanding $P_1$ and $z_p(t)$ in \eqref{y1causal} gives
\[
\begin{aligned}
&y(t) - P_{11}^y y(t-m) - \dots - P_{1m}^y y(t-1) \\
= \ &P_{11}^u u(t-m) + \dots + P_{1m}^u u(t-1) + F_{11} u(t) + e(t) .
\end{aligned}
\]
This is a linear ARX$(m,m+1,0)$ model.
\end{proof}

In light of Proposition \ref{arx}, we construct $A$, $B$, $C$, $D$, $K$, and $\varepsilon_f(t) = (\varepsilon(t)$, \dots, $\varepsilon(t + h - 1)) \in \R^{h n_y}$ such that the trajectory predictor \eqref{predictor} is consistent with iteratively applying the one-step-ahead ARX model. Lemma \ref{abcdk} provides suitable definitions of the system matrices.

\begin{lemma}
\label{abcdk}
Let $\varepsilon(t) = e(t)$, $C = P_1$, $D = F_{11}$,
\bneq
\begin{aligned}
A &= \bmat
\bmat 0_{(m-1) n_z, n_z} & I_{(m-1) n_z} \emat \\
0_{n_u, m n_z} \\
C \\
\emat \\ 
B &= \bmat
0_{(m-1) n_z, n_u} \\
I_{n_u} \\
D \\
\emat , \ K = \bmat
0_{(m-1) n_z + n_u, n_y} \\
I_{n_y} \\
\emat . \label{ab}
\end{aligned}
\eneq
Then $z_p(t)$, $u(t)$, and $y(t)$ satisfy the causal one-step predictor \eqref{y1causal} if and only if they satisfy the state-space model \eqref{ss}.
\end{lemma}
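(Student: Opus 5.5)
The plan is a direct block-by-block comparison. The key point is that the state $z_p(t+1)$ is fully determined by $z_p(t)$ and the new sample $z(t) = (u(t), y(t))$. Hence \eqref{ss} amounts to two things: a shift identity that holds by construction of $z_p$, and an output equation that coincides with \eqref{y1causal}. One remark on dimensions first. Although \eqref{ss} lists $C \in \R^{h n_y \times m n_z}$ and $D \in \R^{h n_y \times h n_u}$, the lemma sets $C = P_1 \in \R^{n_y \times m n_z}$ and $D = F_{11} \in \R^{n_y \times n_u}$. I will read \eqref{ss} with these one-step dimensions, which are the ones that make \eqref{ab} conformable. I will also treat $z_p(t+1)$ as the shifted history built from $z_p(t)$ and $z(t)$, as in Alg. \ref{tpcAlg}.

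First, the output equation. With $\varepsilon(t) = e(t)$, $C = P_1$ and $D = F_{11}$, the second line of \eqref{ss} reads $y(t) = P_1 z_p(t) + F_{11} u(t) + e(t)$. This is literally \eqref{y1causal}.

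Second, the state equation. I partition $z_p(t+1) = (z(t-m+1), \dots, z(t-1), u(t), y(t))$ into three blocks:
\begin{itemize}
\item the top $(m-1)n_z$ rows, $(z(t-m+1), \dots, z(t-1))$;
\item the next $n_u$ rows, $u(t)$;
\item the last $n_y$ rows, $y(t)$.
\end{itemize}
I then compute $A z_p(t) + B u(t) + K \varepsilon(t)$ block by block using \eqref{ab}.
\begin{itemize}
\item The top block gives $\bmat 0 & I_{(m-1)n_z} \emat z_p(t) = (z(t-m+1), \dots, z(t-1))$, since $B$ and $K$ are zero there. This matches $z_p(t+1)$ identically.
\item The middle block gives $0 \cdot z_p(t) + I_{n_u} u(t) + 0 = u(t)$, again an identity.
\item The bottom block gives $C z_p(t) + D u(t) + \varepsilon(t)$.
\end{itemize}

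Therefore the state equation holds if and only if $y(t) = C z_p(t) + D u(t) + \varepsilon(t)$, which is the output equation, i.e.\ \eqref{y1causal}. Both directions follow. If \eqref{y1causal} holds, the output equation holds and all three blocks of the state equation match. Conversely, \eqref{ss} contains the output equation, which is \eqref{y1causal}.

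The only delicate step is bookkeeping rather than mathematics. I need to check that the block sizes in \eqref{ab} are conformable: $A$ has $(m-1)n_z + n_u + n_y = m n_z$ rows, and $K$ places $I_{n_y}$ in the last $n_y$ rows. I also need to confirm that the ordering $z = (u, y)$ puts $u(t)$ before $y(t)$ in the last block of $z_p(t+1)$, so that the middle and bottom rows land correctly.
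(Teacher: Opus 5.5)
Your proof is correct and follows essentially the same route as the paper: you identify the output equation with \eqref{y1causal}, then show the dynamics are just the shift of $z_p$ with $z(t)=(u(t),y(t))$ appended. The paper writes this as a single shift-plus-injection identity into which it substitutes $y(t)$, while you check it block by block; your remark that $C$ and $D$ must carry one-step dimensions $n_y\times mn_z$ and $n_y\times n_u$, not the $hn_y$ dimensions listed after \eqref{ss}, is also right.
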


\begin{proof}
With the definitions of $C$, $D$, and $\varepsilon(t)$, the one-step predictor \eqref{y1causal} is equivalent to the output equation in \eqref{ss}. For the dynamics, the definition \eqref{z} of $z_p(t)$ implies that
\bneq
\begin{aligned}
&z_p(t+1) = \bmat
z(t - m + 1) \\
\vdots \\
z(t) \\
\emat = \bmat
0_{(m-1) n_z, n_z} \\
I_{n_z} \\
\emat z(t)  \label{app1} \\
&\quad + \bmat
0_{(m-1)n_z, n_z} & I_{(m-1)n_z} \\
0_{n_z} & 0_{n_z, (m-1) n_z} \\
\emat z_p(t) .
\end{aligned}
\eneq
Since $z(t) = (u(t), y(t))$ and $y(t) = C z_p(t) + D u(t) + \varepsilon(t)$, 
\[
\begin{aligned}
z(t) = \bmat 
I_{n_u} \\
0_{n_y,n_u} \\
\emat u(t) + \bmat
0_{n_u,n_y} \\
I_{n_y} \\
\emat ( C z_p(t) + D u(t) + \varepsilon(t) ) .
\end{aligned}
\]
Substituting this expression into \eqref{app1} gives $z_p(t+1) = A z_p(t) + B u(t) + K \varepsilon(t)$ with the $A$, $B$, and $K$ in \eqref{ab}.
\end{proof}

Lemma \ref{abcdk} establishes a correspondence between the state-space model \eqref{ss} and the one-step-ahead predictor \eqref{y1causal}. To build the full state-space predictor in the form \eqref{predictor}, we iteratively apply \eqref{ss}. Solving the output equation for $\varepsilon(t)$ and substituting it into the dynamics gives
\bneq
z_p(t+1) = \mathcal A z_p(t) + \mathcal B u(t) + K y(t) , \label{innovationsDynamics}
\eneq
where $\mathcal A = A - K C$ and $\mathcal B = B - K D$. For $i > 0$, iteratively applying this innovations-form model gives
\bneq
\begin{aligned}
&y(t + i) = C \mathcal A^i z_p(t) + D u(t + i) + \varepsilon(t + i)  \label{yti1}  \\
&+ C \sum_{j=0}^{i-1} \mathcal A^{i-j-1} ( \mathcal B u(t + j) + K y(t + j) ) . 
\end{aligned}
\eneq
Forming this model at each $i = 1$, \dots, $h - 1$ gives a predictor of the form \eqref{rawTransient} with 
\bneq
\begin{aligned}
\Phi^p &= \bmat
C \mathcal A^0 \\
\vdots \\
C \mathcal A^{h-1} \\
\emat , \ \Phi^u = \bmat
D & & & \\
C \mathcal A^0 \mathcal B & D & & \\
\vdots & \ddots & \ddots & \\
C \mathcal A^{h-2} \mathcal B & \cdots & C \mathcal A^0 \mathcal B & D \\
\emat \\ 
\Phi^y &= \bmat
 & & & \\
C \mathcal A^0 K & & & \\
\vdots & \ddots & & \\
C \mathcal A^{h-2} K & \cdots & C \mathcal A^0 K & \\
\emat . \label{ssPhi} 
\end{aligned}
\eneq
As expected from the state-space model \eqref{ss}, $\Phi^p$ is an observability matrix and the Toeplitz matrices $\Phi^u$ and $\Phi^y$ contain the Markov parameters $D$, $C \mathcal A^i \mathcal B$, and $C \mathcal A^i K$.

Given $\Phi^p$, $\Phi^u$, and $\Phi^y$, the state-space predictor matrices $P_\text{sts}$ and $F_\text{sts}$ can be formed according to \eqref{transient}. To quantify uncertainty, the one-step-ahead error covariance matrix estimate
\[
(\tilde Y_{\textcolor{black}{f1}} - P_1 \tilde Z_{\textcolor{black}{p}} - F_{11} \tilde U_{\textcolor{black}{f1}}) (\tilde Y_{\textcolor{black}{f1}} - P_1 \tilde Z_{\textcolor{black}{p}} - F_{11} \tilde U_{\textcolor{black}{f1}})^\top / n ,
\]
where $\tilde Y_{\textcolor{black}{f1}} \in \R^{n_y \times n}$ is the first $n_y$-block row of $\tilde Y_{\textcolor{black}{f}}$ and $\tilde U_{\textcolor{black}{f1}} \in \R^{n_u \times n}$ is the first $n_u$-block row $\tilde U_{\textcolor{black}{f}}$, or the trajectory error covariance matrix estimate
\[
(\tilde Y_{\textcolor{black}{f}} - P \tilde Z_{\textcolor{black}{p}} - F \tilde U_{\textcolor{black}{f}}) (\tilde Y_{\textcolor{black}{f}} - P \tilde Z_{\textcolor{black}{p}} - F \tilde U_{\textcolor{black}{f}})^\top / n
\]
can be computed, optionally subtracting the number of estimated parameters, $n_y (m n_z + n_u)$, in the denominators.

Alg. \ref{stsAlg} summarizes state-space predictor construction. Thm. \ref{stsTheorem} establishes its equivalence to the state-space model \eqref{ss}.

\begin{algorithm}[t]
\caption{State-space predictor identification}
\label{stsAlg}
\begin{algorithmic}
\State {\bf Input:} Memory $m$; prediction horizon $h$; training data matrices $\tilde Z_{\textcolor{black}{p}}$, $\tilde U_{\textcolor{black}{f1}}$, $\tilde Y_{\textcolor{black}{f1}}$

\bit
\item One-step-ahead fit: $\bmat C & D \emat = \tilde Y_{\textcolor{black}{f1}} \bmat
\tilde Z_{\textcolor{black}{p}} \\
\tilde U_{\textcolor{black}{f1}} \\
\emat^\dagger$

\item $A$, $B$, $K$ from \eqref{ab}, $\mathcal A = A - K C$, $\mathcal B = B - K D$ 

\item $\Phi^p$, $\Phi^u$, $\Phi^y$ from \eqref{ssPhi}

\item Return $P$, $F$ from \eqref{transient}
\eit
\end{algorithmic}
\end{algorithm}

\begin{theorem}
\label{stsTheorem}
Form $A$, $B$, $C$, $D$, $K$, $P$, $F$, and $\Phi^y$ according to Alg. \ref{stsAlg} and let $e_f(t) = (I - \Phi^y)^{-1} \varepsilon_f(t)$. Then $z_p(t)$, $u_f(t) = (u(t), \dots, u(t + h - 1))$, and $y_f(t) = (y(t), \dots, y(t + h - 1))$ satisfy \eqref{predictor} if and only if
\bneq
\begin{aligned}
z_p(t+i+1) &= A z_p(t+i) + B u(t+i) + K \varepsilon(t+i) \\ 
y(t+i) &= C z_p(t + i) + D u(t+i) + \varepsilon(t+i) \label{ssi}
\end{aligned}
\eneq
for $i = 0$, \dots, $h - 1$.
\end{theorem}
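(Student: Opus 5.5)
The plan is to go through the intermediate representation \eqref{rawTransient} with the matrices \eqref{ssPhi}, and then apply the change of variables \eqref{transient}. Throughout, $z_p(t+i)$ for $i\ge 1$ denotes the window built from $z_p(t)$ and $z(t),\dots,z(t+i-1)$ as in \eqref{z}, and $\varepsilon_f(t) = (I-\Phi^y) e_f(t)$.

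\emph{Step 1: reduce \eqref{predictor} to \eqref{rawTransient}.} The matrix $I-\Phi^y$ is block lower triangular with identity diagonal blocks, so it is invertible. Since $P=(I-\Phi^y)^{-1}\Phi^p$, $F=(I-\Phi^y)^{-1}\Phi^u$ and $e_f(t)=(I-\Phi^y)^{-1}\varepsilon_f(t)$, multiplying \eqref{predictor} on the left by $I-\Phi^y$ gives \eqref{rawTransient} with $\Phi^p,\Phi^u,\Phi^y$ from \eqref{ssPhi}. Multiplying by $(I-\Phi^y)^{-1}$ reverses this. Hence \eqref{predictor} holds if and only if \eqref{rawTransient} holds, and it suffices to show that \eqref{rawTransient} is equivalent to \eqref{ssi} for $i=0,\dots,h-1$.

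\emph{Step 2: \eqref{ssi} implies \eqref{rawTransient}.} Solve the output equation of \eqref{ssi} for $\varepsilon(t+i)$ and substitute it into the dynamics. This gives the innovations recursion \eqref{innovationsDynamics} at each step $t+i$. Unrolling it by induction on $i$ yields
\[
z_p(t+i) = \mathcal A^i z_p(t) + \sum_{j=0}^{i-1}\mathcal A^{i-j-1}\bigl(\mathcal B u(t+j) + K y(t+j)\bigr).
\]
Inserting this into the output equation gives \eqref{yti1} for $i=1,\dots,h-1$, and the case $i=0$ is the output equation itself. Stacking these $h$ block rows and reading off coefficients reproduces exactly the block structure of \eqref{ssPhi}, with the error vector equal to $\varepsilon_f(t)$.

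\emph{Step 3: \eqref{rawTransient} implies \eqref{ssi}.} The $i$th block row of \eqref{rawTransient} with \eqref{ssPhi} reads
\[
y(t+i) = C\Bigl[\mathcal A^i z_p(t)+\sum_{j<i}\mathcal A^{i-j-1}(\mathcal B u(t+j)+K y(t+j))\Bigr] + D u(t+i) + \varepsilon(t+i).
\]
I will prove by induction on $i$ that the bracketed term equals the actual window $z_p(t+i)$. The case $i=0$ is trivial. For the inductive step, assume the claim for $i$. Then the $i$th block row is precisely the output equation of \eqref{ssi} at step $t+i$. By Lemma \ref{abcdk}, the true shifted window then satisfies the dynamics of \eqref{ssi}; equivalently, it satisfies \eqref{innovationsDynamics} at step $t+i$. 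Applying \eqref{innovationsDynamics} to the bracketed expression at index $i$ produces the bracketed expression at index $i+1$, which closes the induction. Along the way, Lemma \ref{abcdk} also supplies the state equations in \eqref{ssi}.

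The main obstacle is this converse direction. In \eqref{ssi} the quantity $z_p(t+i)$ is not a free state variable: it is determined by the shift structure of the data. One must therefore verify that the model state generated by $A$, $B$, $K$ coincides with the data window at every step. The identity rows of $A$ and $B$ in \eqref{ab}, used through Lemma \ref{abcdk}, are what guarantee this; the remaining work is careful bookkeeping of indices.
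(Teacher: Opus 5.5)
Your proposal is correct and follows essentially the same route as the paper: the \emph{if} direction unrolls the innovations form \eqref{innovationsDynamics} to obtain \eqref{yti1} and hence \eqref{ssPhi}, and the \emph{only if} direction proceeds by induction on $i$, using Lemma \ref{abcdk} to turn each block-row output equation into the state equation so that the unrolled expression coincides with the data window $z_p(t+i)$. Your explicit Step 1 (multiplying \eqref{predictor} by $I-\Phi^y$ to pass to \eqref{rawTransient}) just spells out what the paper uses implicitly when it asserts that $y(t+i)$ satisfies \eqref{yti1}.
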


\begin{proof}
For the `if' direction, the discussion above showed that iteratively applying the state-space model \eqref{ss} with any $A$, $B$, $C$, $D$, and $K$ generates a predictor of the form \eqref{predictor} with $P$ and $F$ of the form \eqref{transient} and $\Phi^p$, $\Phi^u$, $\Phi^y$ of the form \eqref{ssPhi}.

The `only if' direction proceeds by induction. At $i = 0$, the first block row of $e_f(t) = (I - \Phi^y)^{-1} \varepsilon_f(t)$ gives $e(t) = \varepsilon(t)$, so the result holds by Lemma \ref{abcdk}. For the inductive step, we suppose that at some $i$, the result holds for $j = 0$, \dots, $i-1$. Under this assumption, iteratively applying \eqref{innovationsDynamics} gives
\[
z_p(t + i) = \mathcal A^i z_p(t) + \sum_{j=0}^{i-1} \mathcal A^{i-j-1} ( \mathcal B u(t + j) + K y(t + j) ) .
\]
As $P$, $F$, and $e_f(t)$ are constructed according to \eqref{transient} and \eqref{ssPhi}, $y(t + i)$ satisfies \eqref{yti1}. Comparing \eqref{yti1} to the expression above for $z_p(t + i)$ shows that the output equation in \eqref{ssi} holds. With the output equation in hand, the same argument from the proof of Lemma \ref{abcdk} shows that the dynamics equation in \eqref{ssi} holds with $A$, $B$, and $K$ defined as in \eqref{ab}.
\end{proof}

Thm. \ref{stsTheorem} implies that TPC with the state-space predictor is effectively a special case of MPC with the state-space model \eqref{ss}. More concretely, the TPC problem \eqref{delta0tpc} is equivalent to the following MPC problem:
\bneq
\begin{aligned}
&\text{minimize} \quad c_0( (u(1|t),\dots,u(h|t)) , (y(1|t),\dots,y(h|t)) ) \label{mpc} \\
&\text{subject to} \\
&c_j( (u(1|t),\dots,u(h|t)) , (y(1|t),\dots,y(h|t)) ) \leq 0, \\ 
&\quad j = 1, \dots, J \\
&x(1 | t) = z_p(t) , \ x(i + 1 | t) = A x(i | t) + B u(i | t) , \ i = 1, \dots, h \\
&y(i | t) = C x(i  | t) + D u(i | t) , \ i = 1, \dots, h , \\
\end{aligned}
\eneq
with variables $x(1|t)$, \dots, $x(h + 1 |t)$, $u(1|t)$, \dots, $u(h|t)$, $y(1|t)$, \dots, $y(h|t)$ and with $A$, $B$, $C$, and $D$ defined as in Lemma \ref{abcdk}. Corollary \ref{stspmpc} formalizes this observation.

\begin{corollary}
\label{stspmpc}
Define $A$, $B$, $C$, $D$, $P$, and $F$ as in Alg. \ref{stsAlg}. Then
\[
u_f^\star(t) = \bmat
u^\star(1|t) \\
\vdots \\
u^\star(h|t) \\
\emat , \ y_f^\star(t) = \bmat
y^\star(1|t) \\
\vdots \\
y^\star(h|t) \\
\emat
\]
are optimal for the TPC problem \eqref{delta0tpc} if and only if $u^\star(1|t)$, \dots, $u^\star(h|t)$, $y^\star(1|t)$, \dots, $y^\star(h|t)$, and $x^\star(1|t)$, \dots, $x^\star(h+1|t)$ are optimal for the MPC problem \eqref{mpc}.
\end{corollary}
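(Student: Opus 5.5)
The plan is to show that the two problems have the same objective and the same constraints on the input/output variables, once the state variables of \eqref{mpc} are eliminated. The objectives and the inequality constraints $c_j \le 0$ in \eqref{delta0tpc} and \eqref{mpc} are literally the same functions of $(u_f(t), y_f(t))$. So everything reduces to showing that the feasible sets match. Specifically, I would show that the set of pairs $(u_f, y_f)$ satisfying $y_f = P z_p(t) + F u_f$ equals the projection onto $(u_f, y_f)$ of the set of tuples $(x(1|t),\dots,x(h+1|t),u_f,y_f)$ satisfying the equality constraints of \eqref{mpc}. I would also note that the states are uniquely determined by $(z_p(t), u_f)$ through the recursion. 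Given that, optimality transfers in both directions: an optimum of one problem maps to a feasible point of the other with the same cost, and any strictly better point of the other would map back to a strictly better point of the first.

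For the set equality I would invoke Thm.~\ref{stsTheorem} with $\varepsilon_f(t) = 0$, and hence $e_f(t) = (I-\Phi^y)^{-1}\varepsilon_f(t) = 0$. This is exactly the regime of \eqref{delta0tpc}, where $r = \delta_0$ forces $e_f(t) = 0$. One point needs care: the theorem is phrased with actual time indices $z_p(t+i)$, $u(t+i)$, $y(t+i)$, whereas \eqref{mpc} uses planned quantities $x(i|t)$, $u(i|t)$, $y(i|t)$. I would identify $x(i+1|t)$ with $z_p(t+i)$, $u(i+1|t)$ with $u(t+i)$, and $y(i+1|t)$ with $y(t+i)$. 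The identification is purely notational, because the proof of Thm.~\ref{stsTheorem} only uses the algebraic recursion \eqref{ssi} together with the initial condition $x(1|t) = z_p(t)$. With $\varepsilon \equiv 0$, \eqref{ssi} becomes exactly the equality constraints of \eqref{mpc}.

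The remaining issue is an off-by-one check on the index ranges, which I expect to be the main (minor) obstacle. \eqref{mpc} runs its dynamics for $i = 1,\dots,h$, producing $x(h+1|t)$, while \eqref{ssi} covers $i = 0,\dots,h-1$, i.e.\ the same $h$ steps. The final state $x(h+1|t)$ appears in no output equation and no cost, so it is free only in name: it is pinned down by the last dynamics equation. I would state explicitly that any MPC-feasible tuple is determined by $(u_f, y_f)$ together with $z_p(t)$.

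It is also worth noting that $u(i|t)$ enters $x(i+1|t)$ through $B$, which contains $I_{n_u}$. This matches the way the state-space predictor feeds planned inputs into the shift-register state, so it is consistent with the Lemma~\ref{abcdk} construction. With these checks done, the equivalence of the feasible sets, and hence of the optimizers, follows directly.
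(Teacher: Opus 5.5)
Your proposal is correct and follows the paper's route: the paper's proof is a one-line appeal to Thm.~\ref{stsTheorem} as giving a bijection between the feasible points of \eqref{delta0tpc} and \eqref{mpc}, which have identical objective and constraint functions. You spell out the details the paper leaves implicit: taking $\varepsilon_f(t)=0$ so that $e_f(t)=0$, identifying $x(i+1|t)$ with $z_p(t+i)$, checking the index ranges, and noting that the states are uniquely fixed by $z_p(t)$ and $u_f(t)$. One small imprecision is calling the identification ``purely notational'', since the theorem's `only if' direction uses the shift-register definition of $z_p(t+i)$; this is harmless, because in \eqref{mpc} the constraints with the specific $A$, $B$, $C$, $D$ force $x(i|t)$ to be exactly that shift register.
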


\begin{proof}
The result follows from Thm. \ref{stsTheorem}, which establishes a bijection between the variables of \eqref{delta0tpc} and \eqref{mpc}.
\end{proof}

Unlike most output-feedback MPC methods, TPC with the state-space predictor does not require a state estimator. This is because the controller has perfect knowledge of the state $z_p(t)$, which is just the recent inputs $u(t-m)$, \dots, $u(t-1)$ interleaved with the recent outputs $y(t-m)$, \dots, $y(t-1)$.

\subsection{Trajectory predictor comparisons}

\begin{figure}
\centering
\includegraphics[width=0.8\columnwidth]{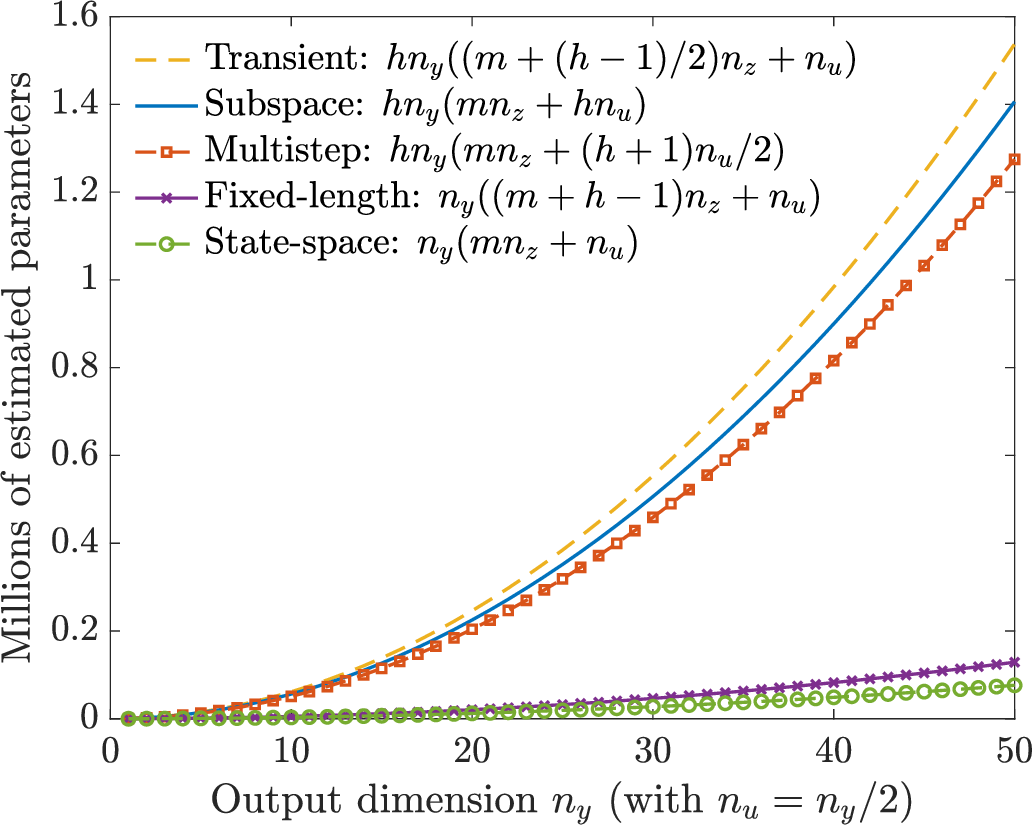}
\caption{Number of estimated parameters in each trajectory predictor. Formulas are general; curves use $m = 20$, $h = 15$, $n_u = n_y / 2$.
}
\label{paramFig}
\end{figure}

The subspace predictor is the only trajectory predictor discussed above that is not causal. Numerical experiments suggest that it may perform poorly when training data are gathered in closed loop \cite{dong2008closed, moffat2024transient, sader2025causality}. This may make the subspace predictor unsuitable for adaptive control or for applications where it is unsafe or impractical to excite the system with open-loop inputs to gather training data.

Different predictors require estimating different numbers of parameters. Fig. \ref{paramFig} shows how the number of estimated parameters in each predictor scales with the system dimensions. The formulas in the upper left are general; the curves are drawn for the example of $m = 20$, $h = 15$, and $n_u = n_y / 2$. The transient, subspace, and multistep predictors require estimating about an order of magnitude more parameters than the fixed-length and state-space predictors. The fixed-length and state-space predictors' favorable scaling may make them preferable for large-scale systems or for applications where it is costly or impractical to gather a large training dataset, as fitting models with more parameters generally requires more data. Because the state-space predictor only requires a one-step-ahead fit, it also makes slightly more efficient use of data, converting $\tilde z(1)$, \dots, $\tilde z(d)$ into $d - m$ identification examples, compared to $d - m - h + 1$ for the other predictors.

Fitting a trajectory predictor involves computing the pseudo-inverse of a data matrix. The data matrix must have full row rank for the pseudo-inverse to be well-defined in this context, so the data matrix must have at least as many columns as rows. This sets a minimum number of training examples required to fit the predictor. Table \ref{dataRequirements} shows the data requirements to fit each predictor. The state-space predictor requires the least data.

\begin{table}
\centering
\caption{Minimum number of training examples to fit each predictor}
\begin{tabular}{l | l | l}
 & Minimum \# of examples & \# less than transient \\
\hline
Transient & $(n_z+1)(m+h) - n_y - 1$ &  \\
Subspace & $(n_z + 1)m + (n_u + 1)h - 1$ & $(h - 1) n_y$ \\
Multistep & $(n_z + 1)m + (n_u + 1)h - 1$ & $(h - 1) n_y$ \\
Fixed-length & $(n_z + 1)m + h + n_u - 1$ & $(h - 1) n_z$ \\
State-space & $(n_z + 1) m + n_u$ & $(h - 1) (n_z + 1)$ \\
\end{tabular}
\label{dataRequirements}
\end{table}

\section{NUMERICAL EXPERIMENTS}
\label{experiments}

\begin{figure*}
\centering
\includegraphics[width=0.99\textwidth]{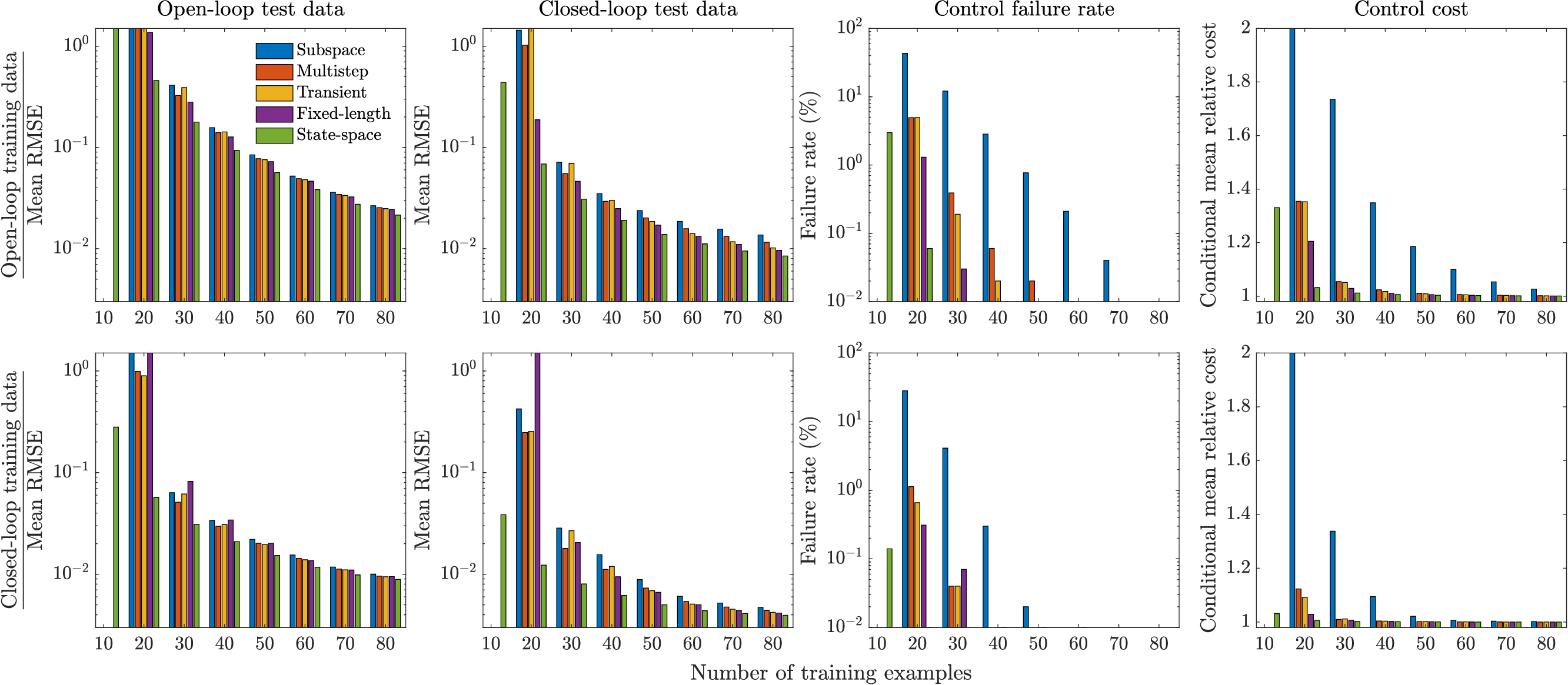}
\caption{Mean prediction RMSEs on test data gathered in open (first column) and closed loop (second) with training data gathered in open (top row) and closed loop (bottom) over 10,000 Monte Carlo runs. \textcolor{black}{Third column: Rates of failure, defined as incurring a cost more than 10 times the LQG cost in that Monte Carlo run. Fourth: Conditional mean cost relative to mean LQG cost, conditioned on the controller not failing.}}
\label{dFig}
\end{figure*}

To facilitate reproducibility and comparison with past results, we consider the double integrator example from \cite{moffat2025bias, moffat2024transient}. Moffat et al. Euler-discretize the continuous-time dynamics $\dot x_1 = x_2 + w_1$, $\dot x_2 = u + w_2$ with unit time step:
\[
\begin{aligned}
x(t+1) &= \bmat 1 & 1 \\ 0 & 1 \emat x(t) + 
\bmat 0 \\ 1 \emat 
u(t) + w(t) \\
y(t) &= x(t) + v(t) .
\end{aligned}
\]
All random variables are independent and (except the reference $y_r(t)$) zero-mean Gaussian. The disturbance $w(t)$ has covariance $\diag(0.0025, 0.0001)$. The noise $v(t)$ has covariance $0.0004 I_2$. To gather open-loop training data, we use 0.01-variance noise for $u(t)$. To gather closed-loop training data, we use discrete proportional-derivative control, $u(t) = - \bmat 0.0833 & 0.7944 \emat (y(t) - y_r(t))$ plus 0.01-variance noise for persistent excitation. The cost is 
\bneq
\sum_t (y(t) - y_r(t))^\top \diag(1000, 10) (y(t) - y_r(t)) + u(t)^2 \label{cost}
\eneq
with $y_r(t) = (y_{r1}(t), 0)$. We generate $y_{r1}$ as a sequence of steps with uniform random durations on $\setof{1, \dots, 50}$ and magnitudes on $[-5,5]$. There are no constraints beyond $y_f(t) = P z_p(t) + F u_f(t)$ (in the $r = \delta_0$ case), which allows elimination of $y_f(t)$. This reduces the TPC problem \eqref{delta0tpc} to an unconstrained convex quadratic program with solution 
\[
u^\star_f(t) = - (F^\top Q F + I_{h n_u})^{-1} F^\top Q (P z_p(t) - \hat y_{rf}(t) ) ,
\]
where $Q = I_h \otimes \diag(1000,10)$ and $\otimes$ denotes the Kronecker product. For the predicted reference trajectory $\hat y_{rf}(t) = (\hat y_r(t), \dots, \hat y_r(t + h - 1)) \in \R^{h n_y}$, we use the persistence model $\hat y_r(t+i) = y_r(t-1)$ for all $i = 0$, \dots, $h - 1$.

\subsection{Predictor generalization and closed-loop performance}
\label{sim1}

We evaluate TPC with the subspace, multistep, transient, fixed-length, and state-space predictors using Monte Carlo simulation. In each Monte Carlo run, we identify one version of each predictor from open-loop data and a second from closed-loop data. For each predictor and training dataset, we select the memory $m$ that minimizes the Akaike information criterion (AIC) \cite{akaike1974new}, averaged over elements of $y$ and over steps ahead in the prediction horizon. The AIC, an estimate of the information lost when representing the underlying data-generating process by a given model, generally suggests longer memories for larger training datasets. For the predictors and training dataset sizes investigated here, the AIC usually suggests a memory of one or two time steps and rarely more than three. We test prediction accuracy for both versions of each predictor on two fresh datasets, again gathering one test dataset in open loop and one in closed loop. For control, we test TPC with both versions of each predictor on the test datasets, benchmarked against an oracle LQG controller that has perfect knowledge of the true system model. \textcolor{black}{In all Monte Carlo runs and all data configurations, the state-space predictor system was controllable, as is the true system.} We use a prediction horizon of $h = 10$ time steps for all predictors in all data configurations. With this horizon, TPC with all predictors performs well given sufficient training data.

The first and second columns in Fig. \ref{dFig} show how prediction errors scale with the training dataset size $d$ for each predictor. Bar heights are sample means over 10,000 Monte Carlo runs with a test duration of 400 time steps. Training datasets are gathered in open loop for the top row of plots and closed loop for the bottom. The first and second columns show prediction root mean squared errors (RMSEs) on test data gathered in open and closed loop, respectively. RMSEs are averaged over elements of $y$ and over steps-ahead predictors. \textcolor{black}{The state-space predictor is formable with the smallest training dataset (10 examples), when the other predictors' data matrices are rank-deficient.} Test RMSEs decrease with the training dataset size for all predictors. The state-space predictor has the lowest RMSE of all predictors in all test cases, although RMSE differences decrease as the training dataset size increases. All predictors have lower RMSEs predicting closed-loop test data, even if they are trained on open-loop data. Similarly, all predictors have lower RMSEs when trained on closed-loop data, even if they are predicting open-loop test data.

\textcolor{black}{The third and fourth columns in Fig. \ref{dFig} show how TPC performance scales with the training dataset size for each predictor. The third column shows failure rates, with failure defined as incurring a cost \eqref{cost} more than 10 times the LQG cost in that Monte Carlo run. The fourth column shows the conditional mean cost, normalized by the mean LQG cost and conditioned on the event that the controller does not fail. For all predictors, both failure rates and conditional mean costs are higher with training data gathered in open loop than in closed loop and decrease as the training dataset size increases. The state-space predictor has the lowest failure rate and the lowest conditional mean cost of all predictors in all test cases. For all predictors, the failure rate converges to zero and the conditional mean cost converges to the mean LQG cost. Convergence is fastest with the state-space predictor and slowest with the subspace predictor.}

Fig. \ref{dFig} shows that the least-structured predictor (the subspace predictor) generally has the worst predictive accuracy and closed-loop performance. Prediction accuracy and closed-loop performance are generally better for predictors with more internal structure. The largest improvements come from imposing causal structure on the subspace predictor, which produces the multistep predictor, and from imposing state-space structure. \textcolor{black}{The most structured predictor (the state-space predictor) has the fewest parameters and the best predictive accuracy and closed-loop performance. Differences in predictive accuracy and closed-loop performance across predictors are largest for small training datasets and tend to vanish as the training dataset grows.} Based on these observations, we interpret structural constraints on the predictor as a form of implicit regularization that improves generalization by mitigating the risk of overfitting small training datasets.

\subsection{Relaxation and regularization}

The examples in \S\ref{sim1} used the regularizer $r = \delta_0$, reducing the general TPC problem \eqref{tpc} to \eqref{delta0tpc}. This section explores relaxing the equality constraint in \eqref{delta0tpc} by introducing the slack variable $e_f(t)$ and augmenting the cost function with Tikhonov regularization, giving \eqref{tpc} with $r(e_f(t)) = \lambda \norm{e_f(t)}_2^2$. Eliminating $y_f(t)$ reduces \eqref{tpc} to an unconstrained convex quadratic program in $u_f(t), e_f(t)$ with solution satisfying
\[
\bmat
F^\top Q F & F^\top Q \\
Q F & Q + \lambda I \\
\emat \bmat
u_f^\star(t) \\
e_f^\star(t) \\
\emat = \bmat 
F^\top \\
I \\
\emat Q ( P z_p(t) - \hat y_{rf}(t) ) .
\]
The matrix on the lefthand side is invertible if $F$ has full column rank. We implement the `relax-and-regularize' approach to TPC with the state-space predictor identified from $d = 50$ closed-loop training examples and regularization weight $\lambda = 0.1$. The simulation setup is otherwise the same as in \S\ref{sim1}.

TPC with the relax-and-regularize approach uses less control effort than TPC with $r = \delta_0$, at the cost of worse tracking performance. We interpret this as an example of optimism bias \cite{moffat2025bias}: Relaxing the equality constraint allows the optimization to choose a favorable realization of $e_f(t)$ that would help steer the output toward the reference. In this sense, the relax-and-regularize approach can be viewed as conceptually opposite to the robust optimization approach, which would view $e_f(t)$ as chosen adversarially and hedge against the worst-case realization. On average over 10,000 Monte Carlo runs, relaxing the equality constraint and regularizing $e_f(t)$ increases the mean cost realized in closed loop by 14\%.

\section{FUTURE WORK}
\label{futureWork}

There are several opportunities to extend this work. Certificates of stability and recursive feasibility could be formalized by reducing TPC to MPC via the state-space predictor. Robust, stochastic, scenario, or adaptive TPC implementations could be developed similarly. The relax-and-regularize approach to TPC could be explored more deeply, possibly drawing connections to optimistic exploration in reinforcement learning. TPC could be evaluated for time-varying or nonlinear systems and modified if necessary. TPC could be demonstrated in hardware and its deployment effort and closed-loop performance compared to existing methods.

\section{ACKNOWLEDGMENTS}

LDRP gratefully acknowledges support from the NSF's Graduate Research Fellowship Program. LDRP and AJK thank ASHRAE for support from the Grant-In-Aid Award.

\section*{REFERENCES}
\vspace{-1.5\baselineskip}
\bibliography{IEEEabrv,tpc}

\end{document}